\newtheorem{theorem}{Theorem}%[section]
\newtheorem{proposition}{Proposition}%[section]
\begin{document}
\date{ }

\title{An allocation scheme for estimating the reliability of a
parallel-series system}

\author{Z. BENKAMRA$\ ^{1}$, M. TERBECHE$\ ^{2}$ \ and M. TLEMCANI$\ ^{1,*}$ \\
\\$\ ^{1}$ University Mohamed Boudiaf, L.A.A.R, Algeria\\
$\ ^{2}$ University of Oran, Algeria\\
\\$\ ^{*}$ mounir.tlemcani@univ-pau.fr (M.Tlemcani)}%

\maketitle

\begin{abstract}
We give a hybrid two stage design which can be useful to estimate the
reliability of a parallel--series and/or by duality a series--parallel
system, when the component reliabilities are unknown as well as the total
numbers of units allowed to be tested in each subsystem. When a total sample
size is fixed large, asymptotic optimality is proved systematically and
validated \textit{via} Monte Carlo simulation.
\end{abstract}

%% keywords here, in the form: keyword \sep keyword
\textbf{Keywords.} {Asymptotic optimality; Hybrid; Reliability; Parallel-series; Two stage design.}
%% PACS codes here, in the form: \PACS code \sep code

%% MSC codes here, in the form:
%\MSC 62D05 \sep 62L10 \sep 62L12 \sep 62N05
%% or \MSC[2008] code \sep code (2000 is the default)

\section{Introduction}
In reliability engineering two crucial objectives are considered: (1) to
maximize an estimate of system reliability and (2) to minimize the variance
of the reliability estimate. Because system designers and users are
risk-averse, they generally prefer the second objective which leads to a
system design with a slightly lower reliability estimate but a lower
variance of that estimate {, (eg, \cite{coit}). It provides decision makers efficient rules compared to other
designs which have a higher system reliability estimate, but with a high variability of that estimate.}
In the case of parallel--series and/or by duality series--parallel systems, the variance of
the reliability estimate can be lowered by allocation of a fixed sample
size ({the number of observations or units tested in the system}), while reliability estimate
is obtained by testing components, {see Berry \cite{berry}}.
Allocation schemes for estimation with cost, see for example \cite{berry,djerjour,hardwick,page,terbeche,woodroofe},
lead generally to a discrete optimization problem which can be solved
sequentially using adaptive designs in a fixed or a {Bayesian} framework. {Based on a decision theoretic approach,
the authors seek to minimize either the variance or the Bayes risk associated to a squared error loss function.
The problem of optimal reliability estimation reduces to a problem of optimal allocation of the sample sizes
between Bernoulli populations. Such problems can be solved \textit{via} dynamic programming but this technique becomes
costly and intractable for complex systems. In the case of a two components series or parallel system, optimal procedures
can be obtained and solved analytically when the coefficients of variation of the associated Bernoulli populations are known,
cf. eg, \cite{hardwick-bp,page}. Unfortunately, the coefficients of variation are not known in practice since they depend themselves on
the unknown components reliabilities of the system.
In \cite{rekab}, the author has defined a sequential allocation scheme in
the case of a series system and has shown its {first} order asymptotic optimality for large sample sizes
with comparison to the balanced scheme}.
In \cite{arima}, a reliability sequential schemes (R-SS) was applied successfully to parallel--series
systems, when the total number of units to be tested in each subsystem was
fixed. Recently, in \cite{matcom}, a two stage design for the same purpose was
presented and shown to be asymptotically optimal when the subsystems sample
sizes are fixed and large at the same order of the total sample size of the
{system}.
The problem considered in this paper is useful to estimate the
reliability of a parallel-series and/or by duality a series-parallel system,
when the components reliabilities are unknown as well as the total numbers of
units allowed to be tested in each subsystem. This work improves the results
in \cite{matcom} by developing a hybrid two stage design to get a dynamic
allocation between the sample sizes allowed for subsystems and those allowed
for their components. 
{For example, consider a parallel system of four components (1),(2),(3) and (4), with reliabilities 0.05, 0.1, 0.95 and 0.99, respectively,
under the constraint that the total number of observations allowed is $T=100$. Then, the sequential scheme given in \cite{arima}
suggests to test, respectively, 10, 10, 28 and 52 units and produces a variance of the system reliability estimate equal $10^{-7}$,
approximately. This is visibly better, compared to the balanced scheme which takes an allocation equal 25 in each component and produces
a variance ten times greater then the former. The hybrid sequential scheme proposed in this paper is a tool to solve the same problem when
the components are replaced by subsystems. More precisely, it combines the schemes developed for parallel and/or series systems in order
to obtain approximately the best allocation at subsystems level as well as at components level.}

In section \ref{prelim}, definitions and preliminary results are presented
accompanied by the proper two stage design for a parallel subsystem just as
was defined in \cite{matcom} and its asymptotic optimality is proved for a
fixed and large sample size. In section \ref{lower}, {a parallel-series}
system is considered and it is shown that the variance of
its reliability estimate has a lower bound independent of allocation. This
leads, in section \ref{overlapping}, to the main result of this paper
which lies in the hybrid two stage algorithm and its asymptotic optimality
for a fixed and large sample size allowed for the {system}. In section %
\ref{monte carlo}, the results are validated \textit{via} Monte Carlo
simulation and it is shown that our algorithm leads asymptotically to the
best allocation scheme to reach the lower bound of the variance of the
reliability estimate. The last section is reserved for conclusion and
remarks.
\section{Preliminary results}
\label{prelim}
Consider a system $S$ of $n$ subsystems $S_{1},S_{2},\ldots ,S_{n}$
connected in series, each subsystem $S_{j}$ contains $n_{j}$ components $%
S_{1j},S_{2j},\ldots ,S_{n_{j}j}$ connected in parallel. The system should
be referred as parallel-series system. Assume s-independence
within and across populations, then the system reliability is 
\begin{equation}
\label{r}
R=\prod\limits_{j=1}^{n}R_{j},
\end{equation}
where 
\[
R_{j}=1-\prod\limits_{i=1}^{n_{j}}\left( 1-R_{ij}\right)
\]%
is the reliability of the parallel subsystem $S_{j}$ and $R_{ij}$ the
reliability of component $S_{ij}$. An estimator of $R$ is assumed to be the
product of sample reliabilities%
\[
\hat{R}=\prod\limits_{j=1}^{n}\hat{R}_{j},
\]
where 
\[
\hat{R}_{j}=1-\prod\limits_{i=1}^{n_{j}}\left( 1-\hat{R}_{ij}\right) 
\]
and $\hat{R}_{ij}$ is the sample mean of functioning units in component $%
S_{ij}$,%
\[
\hat{R}_{ij}=\frac{\sum\limits_{l=1}^{M_{ij}}X_{ij}^{(l)}}{M_{ij}},
\]%
$\hat{R}_{ij}$ is used to estimate $R_{ij}$ where $M_{ij}$ is the sample
size and $X_{ij}^{(l)}$ is the binary outcome of the unit $l$ in component $%
S_{ij} $. {It should be pointed that a unit 
is not necessarily a physical object in a component, but it represents just a Bernoulli observation of
the functioning/failure state of that component.} 
Hence, for each subsystem $S_{j}$, one must allocate%
\[
T_{j}=\sum\limits_{i=1}^{n_{j}}M_{ij} 
\]%
units such that the estimated reliability of the {system} is based on a
total sample size%
\[
T=\sum\limits_{j=1}^{n}T_{j}
\]%
As in the series case, {with the help of s-independence
and the fact that a sample mean is an unbiased estimator of a Bernoulli parameter,
see \cite{arima,matcom,coit}}, the variance of the estimated reliability $\hat{R}$
incurred by any allocation scheme can be obtained, 
\begin{equation}
Var\left\{ \hat{R}\right\} =\prod\limits_{j=1}^{n}\left( Var\left( \hat{R}%
_{j}\right) +R_{j}^{2}\right) -\prod\limits_{j=1}^{n}R_{j}^{2},  \label{g}
\end{equation}%
where 
\begin{equation}
Var\left\{ \hat{R}_{j}\right\} =\left( 1-R_{j}\right) ^{2}\left[
\prod\limits_{i=1}^{n_{j}}\left( 1+\frac{c_{ij}^{-2}}{M_{ij}}\right) -1%
\right]  \label{Vp}
\end{equation}%
is given as a function of the allocation numbers $M_{ij}$ and the
coefficients of variation of Bernoulli populations%
\[
c_{ij}=\sqrt{1/R_{ij}-1} 
\]%
We have found convenient to work with the equivalent expression of (\ref{Vp}),
\[
\hspace{-0.15cm} Var\left\{ \hat{R}_{j}\right\} =\left( 1-R_{j}\right) ^{2}%
\left[ \sum\limits_{i=1}^{n_{j}}\frac{c_{ij}^{-2}}{M_{ij}}+F\left( \frac{%
c_{1j}^{-2}}{M_{1j}},...,\frac{c_{n_{j}j}^{-2}}{M_{n_{j}j}}\right) \right], 
\]
where%
\[
F\left( \frac{c_{1j}^{-2}}{M_{1j}},...,\frac{c_{n_{j}j}^{-2}}{M_{n_{j}j}}%
\right) 
\]%
is a sum over all the products of at least two of its arguments.

The problem is to estimate $R$ when components reliabilities are unknowns
and a total number of $T$ units must be tested in {the} system {at components level}. The aim is 
to minimize the variance of $\hat{R}$. Hence, the problem can be addressed by
developing allocation schemes to select $M_{ij}$, the numbers of units to be
tested in each component $i$ in the subsystem $j$, under the constraint%
\begin{equation}
\sum_{j=1}^{n}\sum_{i=1}^{n_{j}}M_{ij}=T,  \label{constr}
\end{equation}%
such that the variance of $\hat{R}$ is as small as possible. Reliability
sequential schemes (R-SS) exist for the series, parallel or parallel-series
configurations when the sample sizes $T_{j}$ of the subsystems are fixed.
Therefore, one can fully optimize the variance of $\hat{R}$ just by applying
the (R-SS) to find the best partition $T_{1},T_{2},...,T_{n}$ of $T$.
Unfortunately, a full sequential design can not be used in practice for large systems since
the number of operations will growth dramatically. For this reason, we reasonably propose a
hybrid two stage design which is shown to be asymptotically optimal when $T$ is large.
\subsection{Lower bound for the variance of the estimated reliability of the
parallel subsystem $S_{j}$\label{par}}
For the asymptotic optimization of the variance of the estimated reliabilities, we make use of the well-known
Lagrange's identity which can be written in the form:

Let $a_{i}>0$, $N_{i}>0,$ for $i=1,...,k$ and $N=N_{1}+\cdots
+N_{k}$, then the following identity holds.%
\begin{equation}
\label{lem1}
\sum\limits_{i=1}^{k}\frac{a_{i}}{N_{i}} = N^{-1}\left[\left( \sum\limits_{i=1}^{k}\sqrt{a_{i}}%
\right) ^{2}+\sum\limits_{i=1}^{k-1}\sum\limits_{j=i+1}^{k}\frac{\left( N_{i}\sqrt{%
a_{j}}-N_{j}\sqrt{a_{i}}\right) ^{2}}{N_{i}N_{j}}\right]
\end{equation}

\begin{proposition}
\label{th0}Denote by 
\begin{equation}
Q_{j}=\left( 1-R_{j}\right) ^{2}T_{j}^{-1}\left(
\sum_{i=1}^{n_{j}}c_{ij}^{-1}\right) ^{2}  \label{qj}
\end{equation}%
then%
\[
Var\left\{ \hat{R}_{j}\right\} \geq Q_{j}
\]
\end{proposition}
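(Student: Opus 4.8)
The plan is to start from the convenient expression for $Var\{\hat R_j\}$ given just above the proposition, namely
\[
Var\left\{ \hat{R}_{j}\right\} =\left( 1-R_{j}\right) ^{2}
\left[ \sum\limits_{i=1}^{n_{j}}\frac{c_{ij}^{-2}}{M_{ij}}+F\left( \frac{c_{1j}^{-2}}{M_{1j}},...,\frac{c_{n_{j}j}^{-2}}{M_{n_{j}j}}\right) \right].
\]
Since $F$ is a sum of products of \emph{positive} quantities $c_{ij}^{-2}/M_{ij}>0$, we immediately have $F\ge 0$, hence
\[
Var\left\{ \hat{R}_{j}\right\} \ge \left( 1-R_{j}\right)^{2}\sum\limits_{i=1}^{n_{j}}\frac{c_{ij}^{-2}}{M_{ij}}.
\]
So the whole problem reduces to bounding $\sum_i c_{ij}^{-2}/M_{ij}$ from below by $T_j^{-1}\bigl(\sum_i c_{ij}^{-1}\bigr)^2$, subject to $\sum_i M_{ij}=T_j$.

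This is exactly what Lagrange's identity (\ref{lem1}) delivers. I would apply it with $k=n_j$, $a_i = c_{ij}^{-2}$ (so $\sqrt{a_i}=c_{ij}^{-1}$), $N_i = M_{ij}$, and $N = T_j$. The identity then reads
\[
\sum\limits_{i=1}^{n_{j}}\frac{c_{ij}^{-2}}{M_{ij}} = T_{j}^{-1}\left[\left( \sum\limits_{i=1}^{n_{j}}c_{ij}^{-1}\right)^{2}+\sum\limits_{i=1}^{n_{j}-1}\sum\limits_{l=i+1}^{n_{j}}\frac{\left( M_{ij}c_{lj}^{-1}-M_{lj}c_{ij}^{-1}\right)^{2}}{M_{ij}M_{lj}}\right].
\]
The double sum on the right is a sum of squares divided by positive numbers, hence nonnegative, so
\[
\sum\limits_{i=1}^{n_{j}}\frac{c_{ij}^{-2}}{M_{ij}} \ge T_{j}^{-1}\left( \sum\limits_{i=1}^{n_{j}}c_{ij}^{-1}\right)^{2}.
\]
Multiplying through by $(1-R_j)^2$ and combining with the $F\ge 0$ step gives $Var\{\hat R_j\}\ge Q_j$, which is the claim.

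There is essentially no serious obstacle here; the proof is a two-line consequence of (\ref{lem1}) plus positivity of $F$. The only points that deserve a sentence of care are: (i) checking that all hypotheses of the Lagrange identity are met, i.e. $c_{ij}^{-2}>0$ and $M_{ij}>0$ (which holds as long as each component is sampled at least once, $M_{ij}\ge 1$), and $T_j=\sum_i M_{ij}$; and (ii) noting explicitly that $F$, being a sum of products of at least two of the strictly positive arguments $c_{ij}^{-2}/M_{ij}$, is itself $\ge 0$, so dropping it can only decrease the expression. It is also worth remarking, for use later in the paper, that equality $Var\{\hat R_j\}= Q_j$ is approached precisely when $F\to 0$ and all the cross-terms $M_{ij}c_{lj}^{-1}-M_{lj}c_{ij}^{-1}\to 0$, i.e. when $M_{ij}$ is proportional to $c_{ij}^{-1}$ and $T_j$ is large; this motivates the allocation scheme of the subsequent sections.
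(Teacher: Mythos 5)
Your proof is correct and follows exactly the paper's own route: apply Lagrange's identity (\ref{lem1}) with $a_i=c_{ij}^{-2}$, $N_i=M_{ij}$, $N=T_j$ to the ``convenient'' expression for $Var\{\hat R_j\}$, then drop the nonnegative double sum and the nonnegative $F$ term. The only difference is that you spell out the positivity checks more explicitly than the paper does, which is harmless.
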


\begin{proof}
The proof is a direct consequence of the previous identity (\ref{lem1}). Indeed%
\begin{eqnarray}
&&Var\left\{ \hat{R}_{j}\right\} =\left( 1-R_{j}\right) ^{2}T_{j}^{-1}\left(
\sum_{i=1}^{n_{j}}c_{ij}^{-1}\right) ^{2}\nonumber    \\
&&+T_{j}^{-1}\left( 1-R_{j}\right)
^{2}\sum_{i=1}^{n_{j}-1}\sum_{k=i+1}^{n_{j}}\frac{\left(
M_{ij}c_{kj}^{-1}-M_{kj}c_{ij}^{-1}\right) ^{2}}{M_{ij}M_{kj}}\nonumber    \\
&&+\left( 1-R_{j}\right) ^{2}F\left( \frac{c_{1j}^{-2}}{M_{1j}},\frac{%
c_{2j}^{-2}}{M_{2j}},...,\frac{c_{n_{j}j}^{-2}}{M_{n_{j}j}}\right)
\label{var rj}
\end{eqnarray}
\end{proof}
\subsection{The two stage design for the parallel subsystem $S_{j}$}
Following the expansion (\ref{var rj}) and since $F$ contains second order terms (see later), one gives interest to the numbers $%
M_{ij}$ which minimize the expression%
\[
T_{j}^{-1}\sum_{i=1}^{n_{j}-1}\sum_{k=i+1}^{n_{j}}\frac{\left(
M_{ij}c_{kj}^{-1}-M_{kj}c_{ij}^{-1}\right) ^{2}}{M_{ij}M_{kj}} 
\]%
Thus $M_{ij}$ must verify for $i=1,...,n_{j}$%
\[
M_{ij}c_{kj}^{-1}=M_{kj}c_{ij}^{-1} 
\]%
which implies that%
\begin{equation}
\label{mij}
M_{ij}=T_{j}\frac{c_{ij}^{-1}}{\sum\limits_{k=1}^{n_{j}}c_{kj}^{-1}}
\end{equation}
If one assumes that $T_{j}$ is fixed then a proper two stage scheme can be
used to determine $M_{ij}$, just as was defined in \cite{matcom}, as follows:

Choose $L_{j}$ as a function of $T_{j}$ such that:

\begin{enumerate}
\item[(i)] $L_{j}$ must be large if $T_{j}$ is large,
\item[(ii)] $L_{j}\leq \frac{T_{j}}{n_{j}},$
\item[(iii)] $\lim\limits_{T_{j}\rightarrow \infty }\frac{L_{j}}{T_{j}}=0$.
\end{enumerate}
One can take for example $L_{j}=\left[\sqrt{T_{j}}\right]$, where $\left[.\right]$ denotes the integer part.
\begin{description}
\item[Stage 1.] Sample $L_{j}$ units from each component $i$ in the subsystem $j$%
, estimate $c_{ij}$ by its maximum likelihood estimator (M.L.E)
\[
\hat{c}_{ij}=\sqrt{\frac{L_{j}}{\sum\limits_{l=1}^{L_{j}}X_{ij}^{(l)}}-1} 
\]
and define the predictor, according to (\ref{mij}),%
\[
\hat{M}_{ij}=\left[T_{j}\frac{\hat{c}_{ij}^{-1}}{\sum\limits_{k=1}^{n_{j}}\hat{c}%
_{kj}^{-1}}\right],~i=1,\ldots ,n_{j}-1 
\]

\item[Stage 2.] Sample $T_{j}-n_{j}L_{j}$ units for which $M_{ij}-L_{j}$ are units
from component $i$ in the subsystem $j$ where $M_{ij}$ is the corrector of $%
\hat{M}_{ij}$ defined by 
\begin{eqnarray*}
M_{ij}&=&\max \left\{ L_{j},\hat{M}%
_{ij}\right\},~ i=1,\ldots,n_{j}-1, \\
M_{n_{j}j}&=&T_{j}-\sum\limits_{k=1}^{n_{j}-1}M_{kj}
\end{eqnarray*}
\end{description}
\begin{theorem}
\label{th1}
Choosing the $M_{ij}$ according to the previous two stage
sampling scheme, one obtains 
\[
\lim_{T_{j}\rightarrow \infty }T_{j}\left( Var\left\{ \hat{R}_{j}\right\}
-Q_{j}\right) =0
\]
\end{theorem}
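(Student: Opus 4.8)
The plan is to substitute the random two--stage allocation $M_{ij}$ into the exact expansion (\ref{var rj}) and show that the two ``remainder'' pieces there vanish faster than $1/T_j$. Writing $D_j=\sum_{k=1}^{n_j}c_{kj}^{-1}$ and $p_{ij}=c_{ij}^{-1}/D_j\in(0,1)$, identities (\ref{var rj}) and (\ref{qj}) give, for the two stage scheme,
\[
T_j\bigl(Var\{\hat R_j\}-Q_j\bigr)=(1-R_j)^2\,A_j+(1-R_j)^2\,T_j\,B_j,
\]
where $A_j=\sum_{i<k}(M_{ij}c_{kj}^{-1}-M_{kj}c_{ij}^{-1})^2/(M_{ij}M_{kj})$ is the double sum in (\ref{var rj}) and $B_j=F(c_{1j}^{-2}/M_{1j},\dots,c_{n_jj}^{-2}/M_{n_jj})$. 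Thus it suffices to prove $A_j\to 0$ and $T_j B_j\to 0$ (almost surely, equivalently in probability) as $T_j\to\infty$.

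The cornerstone is the claim $M_{ij}/T_j\to p_{ij}$ for every $i=1,\dots,n_j$. First I would invoke hypothesis (i): $T_j\to\infty$ forces $L_j\to\infty$, so the strong law applied to the $L_j$ first stage observations in component $S_{ij}$ gives $\hat R_{ij}^{(1)}\to R_{ij}$, hence by continuity $\hat c_{ij}\to c_{ij}$ and $\hat M_{ij}/T_j=\hat c_{ij}^{-1}/\sum_k\hat c_{kj}^{-1}+O(1/T_j)\to p_{ij}$ a.s.\ (the degenerate events $\{\sum_l X_{ij}^{(l)}\in\{0,L_j\}\}$ have probability $R_{ij}^{L_j}+(1-R_{ij})^{L_j}\to 0$ and are eventually irrelevant). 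By hypothesis (iii), $L_j/T_j\to 0<p_{ij}$, so eventually $\hat M_{ij}>L_j$ and the corrector leaves $M_{ij}=\hat M_{ij}$ for $i<n_j$; then the leftover component satisfies $M_{n_jj}/T_j\to 1-\sum_{k<n_j}p_{kj}=p_{n_jj}>0$. (Hypothesis (ii) is what makes the stage 2 budget $T_j-n_jL_j$ nonnegative to begin with.)

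With this in hand the two estimates are routine. For $B_j$: each argument $c_{ij}^{-2}/M_{ij}\sim c_{ij}^{-2}/(p_{ij}T_j)$ is of exact order $T_j^{-1}$, and since $F$ is a \emph{finite} sum of products of at least two of its arguments, $B_j=O(T_j^{-2})$, whence $T_jB_j=O(T_j^{-1})\to 0$. For $A_j$: rewrite each summand as $\bigl(c_{kj}^{-1}\sqrt{M_{ij}/M_{kj}}-c_{ij}^{-1}\sqrt{M_{kj}/M_{ij}}\bigr)^2$; since $M_{ij}/M_{kj}\to p_{ij}/p_{kj}=c_{kj}/c_{ij}$, both terms inside the bracket converge to $(c_{ij}c_{kj})^{-1/2}$, so each summand tends to $0$, and $A_j\to 0$ because the sum over $i<k$ is finite. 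This gives $T_j(Var\{\hat R_j\}-Q_j)\to 0$.

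The one place needing genuine care, and hence the main obstacle, is establishing $M_{ij}=\Theta(T_j)$ \emph{for every} component, including the leftover component $n_j$, and checking that the $\max\{L_j,\cdot\}$ corrector ceases to bind: this is exactly where hypothesis (iii) is indispensable. It is also easy to be misled into thinking one needs a rate of convergence for $\hat c_{ij}$, since the naive bound on $M_{ij}c_{kj}^{-1}-M_{kj}c_{ij}^{-1}$ is of order $T_j|\hat c_{ij}-c_{ij}|$, an $\infty\times 0$ indeterminate form; it is rescued only because the denominator $\sqrt{M_{ij}M_{kj}}$ is itself of order $T_j$, so that mere consistency of $\hat c_{ij}$, not a Berry--Esseen type bound, is what is actually used. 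A secondary point worth a line is that $\hat R_{ij}$ reuses the $L_j$ stage 1 observations that determined $M_{ij}$, so (\ref{Vp}) is the nominal rather than the exact variance; the induced discrepancy is $O(L_j/T_j^2)=o(T_j^{-1})$ by (iii) and does not affect the conclusion.
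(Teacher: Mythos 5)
Your proposal is correct and follows essentially the same route as the paper: the same decomposition of $T_j\left(Var\{\hat R_j\}-Q_j\}\right.$ via the expansion (\ref{var rj}), the strong law of large numbers to get $M_{ij}/M_{kj}\rightarrow c_{kj}/c_{ij}$ once the corrector $\max\{L_j,\hat M_{ij}\}$ ceases to bind, and order estimates showing the double sum and the $T_j\cdot F$ term both vanish. Your treatment is somewhat more careful than the paper's (explicitly handling the degenerate first-stage events, the leftover component $n_j$, and the fact that (\ref{Vp}) is only the nominal variance under a data-dependent allocation), but these are refinements of, not departures from, the published argument.
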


\begin{proof}
From relation (\ref{var rj}), one can write%
\begin{eqnarray}
&& T_{j}\left( Var\left\{ \hat{R}_{j}\right\} -Q_{j}\right) = \left(
1-R_{j}\right) ^{2}\sum_{i=1}^{n_{j}-1}\sum_{k=i+1}^{n_{j}}\frac{\left(
M_{ij}c_{kj}^{-1}-M_{kj}c_{ij}^{-1}\right) ^{2}}{M_{ij}M_{kj}}\nonumber    \\
&& +\left( 1-R_{j}\right) ^{2}T_{j}.F\left( \frac{c_{1j}^{-2}}{M_{1j}},...,\frac{c_{n_{j}j}^{-2}}{M_{n_{j}j}}\right)
\label{vrj}
\end{eqnarray}%
When $T_{j}$ is large enough, condition (iii) gives $M_{ij}={\hat{M}_{ij}}$ for   $%
i=1,...,n_{j}-1$.
So, the strong law of large numbers with the integer part properties give, when $T_{j}\rightarrow \infty $,%
\[
\frac{M_{ij}}{M_{kj}}{\rightarrow }\frac{%
c_{kj}}{c_{ij}},
\]%
for $i=1,...,n_{j}$. Hence,%
\begin{equation}
\label{eq1}
\frac{\left( M_{ij}c_{kj}^{-1}-M_{kj}c_{ij}^{-1}\right) ^{2}}{M_{ij}M_{kj}}=%
\frac{M_{ij}}{M_{kj}}\left( c_{kj}^{-1}-\frac{M_{kj}}{M_{ij}}%
c_{ij}^{-1}\right) ^{2}{\rightarrow }0, \: as \: T_{j}\rightarrow \infty,
\end{equation}
and on the other hand%
\begin{equation}
\label{eq2}
T_{j}.F\left( \frac{c_{1j}^{-2}}{M_{1j}},...,
\frac{c_{n_{j}j}^{-2}}{M_{n_{j}j}}\right) %
{\rightarrow }0, \: as \: T_{j}\rightarrow \infty,
\end{equation}
which achieves the proof.
\end{proof}
\section{Lower bound for the variance of the estimated reliability of {the parallel--series system}
\label{lower}}
We consider now the {parallel--series system} $S$. From expression (\ref{g}), one can write%
\[
Var\left\{ \hat{R}\right\} =R^{2} \left[ \prod\limits_{j=1}^{n}\left( \frac{%
Var\left( \hat{R}_{j}\right) }{R_{j}^{2}}+1\right) -1\right] 
\]

The following theorem gives a lower bound for the variance of $\hat{R}$.

\begin{theorem}
\label{th2}
Denote by%
\[
Q=T^{-1}R^{2} \left[
\sum\limits_{j=1}^{n}\frac{1-R_{j}}{R_{j}}\left(
\sum_{i=1}^{n_{j}}c_{ij}^{-1}\right) \right] ^{2}
\]%
then%
\[
Var\left\{ \hat{R}\right\} \geq Q
\]
\end{theorem}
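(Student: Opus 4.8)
The plan is to bound $Var\{\hat{R}\}$ from below in two stages: first pass from the product form to a sum by discarding the nonnegative higher-order cross terms, then apply Proposition \ref{th0} componentwise, and finally invoke the Lagrange identity (\ref{lem1}) once more, this time at the level of the subsystems rather than the components.

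First I would observe that each ratio $Var(\hat{R}_{j})/R_{j}^{2}$ is nonnegative, so expanding the product in
\[
\prod_{j=1}^{n}\left(\frac{Var(\hat{R}_{j})}{R_{j}^{2}}+1\right)-1
=\sum_{j=1}^{n}\frac{Var(\hat{R}_{j})}{R_{j}^{2}}+G,
\]
where $G$ is the sum of all products of at least two of the $Var(\hat{R}_{j})/R_{j}^{2}$, we have $G\geq 0$ and hence $Var\{\hat{R}\}\geq R^{2}\sum_{j=1}^{n}Var(\hat{R}_{j})/R_{j}^{2}$. Next, Proposition \ref{th0} gives $Var(\hat{R}_{j})\geq Q_{j}=(1-R_{j})^{2}T_{j}^{-1}\big(\sum_{i=1}^{n_{j}}c_{ij}^{-1}\big)^{2}$, so
\[
Var\{\hat{R}\}\geq R^{2}\sum_{j=1}^{n}\frac{a_{j}}{T_{j}},
\qquad a_{j}:=\left(\frac{1-R_{j}}{R_{j}}\sum_{i=1}^{n_{j}}c_{ij}^{-1}\right)^{2}>0 .
\]
Applying identity (\ref{lem1}) with $k=n$, $N_{j}=T_{j}$ and $N=T=\sum_{j}T_{j}$, and dropping the nonnegative double-sum remainder, yields $\sum_{j=1}^{n}a_{j}/T_{j}\geq T^{-1}\big(\sum_{j=1}^{n}\sqrt{a_{j}}\big)^{2}$; since $0\leq R_{j}\leq 1$ and $c_{ij}^{-1}\geq 0$ we have $\sqrt{a_{j}}=\frac{1-R_{j}}{R_{j}}\sum_{i=1}^{n_{j}}c_{ij}^{-1}$, and substituting gives
\[
Var\{\hat{R}\}\geq T^{-1}R^{2}\left[\sum_{j=1}^{n}\frac{1-R_{j}}{R_{j}}\left(\sum_{i=1}^{n_{j}}c_{ij}^{-1}\right)\right]^{2}=Q,
\]
as claimed.

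The computation is essentially routine once one recognizes that the very same Lagrange-identity device used for a single parallel subsystem in Proposition \ref{th0} applies verbatim one level up. The only points requiring a little care are checking that every quantity discarded (the higher-order products $G$ in the expansion of the product, and the double-sum remainder in (\ref{lem1})) is genuinely nonnegative, and that the square roots are taken with the correct sign, which is guaranteed by $0\leq R_{j}\leq 1$. I do not anticipate a real obstacle; the mild subtlety is simply keeping the two nested applications of the identity in the right order.
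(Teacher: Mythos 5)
Your proposal is correct and follows essentially the same route as the paper's own proof: expand the product into the sum of the ratios $Var(\hat{R}_{j})/R_{j}^{2}$ plus nonnegative cross terms, lower-bound each $Var(\hat{R}_{j})$ by $Q_{j}$ via Proposition \ref{th0}, and then apply identity (\ref{lem1}) at the subsystem level, discarding the nonnegative double-sum remainder. Your version is slightly more explicit than the paper about why each discarded quantity is nonnegative, but the argument is identical in substance.
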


\begin{proof}
Expanding the right hand side of (\ref{g}) and using (\ref{r}), one obtains%
\[
Var\left\{ \hat{R}\right\}=R^{2}%
\left[ \sum\limits_{j=1}^{n}\frac{Var\left( \hat{R}_{j}\right) }{R_{j}^{2}}%
+F\left( \frac{Var\left( \hat{R}_{1}\right) }{R_{1}^{2}},...,\frac{Var\left( 
\hat{R}_{n}\right) }{R_{n}^{2}}\right) \right],
\]

which gives with the help of Theorem \ref{th0}%
\begin{equation}
Var\left\{ \hat{R}\right\} \geq R^{2} \sum\limits_{j=1}^{n}\frac{Q_{j}}{%
R_{j}^{2}}= R^{2}%
\sum\limits_{j=1}^{n}\frac{\left( \frac{1-R_{j}}{R_{j}}
\sum\limits_{i=1}^{n_{j}}c_{ij}^{-1}\right) ^{2}}{T_{j}}
  \label{vr1}
\end{equation}%
This last expression has the form 
\[
R^{2} \sum\limits_{j=1}^{n}\frac{%
a_{j}}{T_{j}}
\]%
which can be expanded, thanks to identity (\ref{lem1}), as follows%
\begin{eqnarray}
&&R^{2} T^{-1}\left[ \sum\limits_{j=1}^{n}%
\frac{1-R_{j}}{R_{j}}\left( \sum\limits_{k=1}^{n_{j}}c_{kj}^{-1}\right) \right] ^{2}\nonumber 
 \\
&+&R^{2}
T^{-1}\sum\limits_{i=1}^{n-1}\sum\limits_{j=i+1}^{n}\frac{\left( T_{i}\frac{1-R_{j}%
}{R_{j}}\sum\limits_{k=1}^{n_{j}}c_{kj}^{-1}-T_{j}\frac{1-R_{i}}{R_{i}}%
\sum\limits_{k=1}^{n_{i}}c_{ki}^{-1}\right) ^{2}}{T_{i}T_{j}}  \label{vr2e}
\end{eqnarray}%
and as a consequence 
\[
Var\left\{ \hat{R}\right\} \geq T^{-1} R^{2} \left[ \sum_{j=1}^{n}\frac{1-R_{j}}{%
R_{j}}\left( \sum_{k=1}^{n_{j}}c_{kj}^{-1}\right) \right] ^{2}=Q,
\]%
which achieves the proof.
\end{proof}
\section{The hybrid two stage design for the {parallel--series} system $S$
\label{overlapping}}
Similarly to the case of a subsystem an from expressions (\ref{vr1}) and (%
\ref{vr2e}), one gives interest to the numbers $T_{j}$ which minimize the
quantity 
\[
\sum\limits_{i=1}^{n-1}\sum\limits_{j=i+1}^{n}\frac{\left( T_{i}\frac{1-R_{j}%
}{R_{j}}\sum\limits_{k=1}^{n_{j}}c_{kj}^{-1}-T_{j}\frac{1-R_{i}}{R_{i}}%
\sum\limits_{k=1}^{n_{i}}c_{ki}^{-1}\right) ^{2}}{T_{i}T_{j}}, 
\]%
and obtains the asymptotic optimality criteria 
\[
\frac{T_{i}}{T_{j}}=\frac{\frac{1-R_{i}}{R_{i}}\sum%
\limits_{k=1}^{n_{i}}c_{ki}^{-1}}{\frac{1-R_{j}}{R_{j}}\sum%
\limits_{k=1}^{n_{j}}c_{kj}^{-1}}, 
\]%
for all $i,j\in \left\{ 1,2,...,n\right\} $, which gives the rule%
\begin{equation}
\label{tj}
T_{j}=T\frac{\frac{1-R_{j}}{R_{j}}\sum\limits_{k=1}^{n_{j}}c_{kj}^{-1}}{%
\sum\limits_{k=1}^{n}\frac{1-R_{k}}{R_{k}}\sum%
\limits_{i=1}^{n_{k}}c_{ik}^{-1}} 
\end{equation}
We can now implement a hybrid two stage design for the determination of the
numbers $T_{j}$ as well as $M_{ij}$ as follows:

\begin{description}
\item[Stage 1] choose $L=\left[\sqrt{T}\right]$: one applies the two stage scheme given
in subsection \ref{par} for each subsystem $S_{j}$ with $T_{j}=L$ and $L_{j}=\left[\sqrt{T_{j}}\right]$. Next, obtain the predictor,
according to the rule (\ref{tj}),%
\[
\hat{T}_{j}=\left[T\frac{\frac{1-\hat{R}_{j}}{\hat{R}_{j}}\sum%
\limits_{k=1}^{n_{j}}\hat{c}_{kj}^{-1}}{\sum\limits_{k=1}^{n}\frac{1-\hat{R}%
_{k}}{\hat{R}_{k}}\sum\limits_{i=1}^{n_{k}}\hat{c}_{ik}^{-1}}\right],~j=1,\ldots ,n-1. 
\]

\item[Stage 2] define the corrector 
\begin{eqnarray*}
T_{j}&=&\max \left\{ L,\hat{T}_{j}\right\},~j=1,\ldots ,n-1, \\
T_{n}&=&T-\sum_{j=1}^{n-1}T_{j},
\end{eqnarray*}%
and take back the two stage scheme for each subsystem $S_{j}$ to calculate $%
M_{ij}$ with the sample size equals $T_{j}$.
\end{description}

Now, the main result of this paper is given by the following theorem.

\begin{theorem}
\label{th3}Choosing the $T_{j}$ and $M_{ij}$ according to the hybrid
two stage design, one obtains 
\[
\lim_{T\rightarrow \infty }T\left( Var\left\{ \hat{R}\right\} -Q\right) =0,
\]%
where $Q$ is defined in Theorem \ref{th2}.
\end{theorem}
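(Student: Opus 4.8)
The plan is to climb one level up in the same Lagrange-identity scheme used for Theorem~\ref{th1}, but now feeding Theorem~\ref{th1} itself in at the subsystem level. Starting from the exact identity obtained in the proof of Theorem~\ref{th2},
\[
Var\{\hat R\}=R^{2}\sum_{j=1}^{n}\frac{Var(\hat R_{j})}{R_{j}^{2}}+R^{2}F\!\left(\frac{Var(\hat R_{1})}{R_{1}^{2}},\dots,\frac{Var(\hat R_{n})}{R_{n}^{2}}\right),
\]
I would put $a_{j}=\big(\tfrac{1-R_{j}}{R_{j}}\sum_{i=1}^{n_{j}}c_{ij}^{-1}\big)^{2}$ and $\varepsilon_{j}=Var(\hat R_{j})/R_{j}^{2}-a_{j}/T_{j}\ge 0$; by (\ref{var rj}), $\varepsilon_{j}$ equals $\tfrac{(1-R_{j})^{2}}{R_{j}^{2}}$ times the bracket $T_{j}^{-1}\sum_{i<k}(M_{ij}c_{kj}^{-1}-M_{kj}c_{ij}^{-1})^{2}/(M_{ij}M_{kj})+F(\cdot)$. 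Applying Lagrange's identity (\ref{lem1}) to $R^{2}\sum_{j}a_{j}/T_{j}$ peels off exactly $Q$ (since $\sqrt{a_{j}}=\tfrac{1-R_{j}}{R_{j}}\sum_{i}c_{ij}^{-1}$) plus a cross term, so that
\[
T\big(Var\{\hat R\}-Q\big)=R^{2}\sum_{i<j}\frac{(T_{i}\sqrt{a_{j}}-T_{j}\sqrt{a_{i}})^{2}}{T_{i}T_{j}}+TR^{2}\sum_{j=1}^{n}\varepsilon_{j}+TR^{2}F\!\left(\frac{Var(\hat R_{1})}{R_{1}^{2}},\dots,\frac{Var(\hat R_{n})}{R_{n}^{2}}\right),
\]
and it suffices to drive each of the three terms to $0$.

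For the first term I would establish a.s. convergence of the ratios $T_{i}/T_{j}$ to their targets. Because $L=[\sqrt{T}]\to\infty$ and Stage~1 runs each $S_{j}$ with $L_{j}=[\sqrt{L}]\to\infty$ pilot observations per component, the strong law of large numbers makes $\hat R_{j}$ and $\hat c_{ij}$ strongly consistent, hence $\hat b_{j}:=\tfrac{1-\hat R_{j}}{\hat R_{j}}\sum_{k}\hat c_{kj}^{-1}\to b_{j}:=\sqrt{a_{j}}>0$ a.s.; since $L=o(T)$, the corrector gives $T_{j}=\hat T_{j}$ for $T$ large and $T_{j}/T\to b_{j}/\sum_{k}b_{k}$. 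Writing the cross term as $(T_{i}/T_{j})\big(\sqrt{a_{j}}-(T_{j}/T_{i})\sqrt{a_{i}}\big)^{2}$ and using $T_{j}/T_{i}\to b_{j}/b_{i}=\sqrt{a_{j}}/\sqrt{a_{i}}$, the square tends to $0$ while $T_{i}/T_{j}$ stays bounded, so the first term vanishes.

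For the second term, $T/T_{j}\to\big(\sum_{k}b_{k}\big)/b_{j}$ is a finite constant, so it is enough that $\sum_{i<k}(M_{ij}c_{kj}^{-1}-M_{kj}c_{ij}^{-1})^{2}/(M_{ij}M_{kj})\to 0$ and $T_{j}F(\cdot)\to 0$ for every $j$ — precisely the limits (\ref{eq1})–(\ref{eq2}) proved inside Theorem~\ref{th1}, now invoked along the (data-dependent but a.s.-divergent) budget $T_{j}\ge L\to\infty$, with the M.L.E.'s $\hat c_{ij}$ of Stage~2 based on $[\sqrt{T_{j}}]\to\infty$ observations. For the third term, each argument $Var(\hat R_{j})/R_{j}^{2}=a_{j}/T_{j}+\varepsilon_{j}$ is $O(1/T)$ (as $T_{j}\sim(b_{j}/\sum_{k}b_{k})T$ and, by Theorem~\ref{th1}, $\varepsilon_{j}=o(1/T_{j})$), and $F$ is a finite sum of products of at least two of its arguments, so $F(\cdot)=O(1/T^{2})$ and $TF(\cdot)=O(1/T)\to 0$.

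The main obstacle I anticipate is the bookkeeping forced by the randomness of the $T_{j}$: one must make sure the component-level limits from Theorem~\ref{th1} survive when the subsystem budget is itself data-dependent. The clean route is to let Stage~2 draw fresh observations (or to record an explicit ``SLLN along almost surely divergent sample sizes'' lemma) together with $T_{j}\to\infty$ a.s.; granting that, the rest is the two-tier Lagrange-identity computation sketched above, with Theorem~\ref{th1} supplying the inner estimate and consistency of the Stage~1 pilot the outer one.
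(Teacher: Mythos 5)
Your proposal is correct and follows essentially the same route as the paper: the same two-tier application of Lagrange's identity (\ref{lem1}), splitting $T(Var\{\hat R\}-Q)$ into the subsystem-level cross terms (the paper's $A$), the component-level errors controlled by Theorem~\ref{th1} via the limits (\ref{eq1})--(\ref{eq2}), and the higher-order $F$ term (the paper's $B$), all driven to zero by strong consistency of the Stage~1 estimators and boundedness of $T/T_j$ and $T_j/M_{ij}$. Your only departure is organizational --- you expand the sum and introduce $\varepsilon_j$ before applying the identity, rather than first substituting $Var\{\hat R_j\}=Q_j+o(T^{-1})$ into the product --- and your explicit flag about needing the SLLN along the random, a.s.\ divergent budgets $T_j$ is a point of rigor the paper itself passes over silently.
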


\begin{proof}
The relation (\ref{vrj}) implies that%
\begin{eqnarray*}
& Var\left\{ \hat{R}_{j}\right\}  = Q_{j}+T_{j}^{-1}\left( 1-R_{j}\right)
^{2}\sum\limits_{i=1}^{n_{j}-1}\sum\limits_{k=i+1}^{n_{j}}\frac{\left(
M_{ij}c_{kj}^{-1}-M_{kj}c_{ij}^{-1}\right) ^{2}}{M_{ij}M_{kj}} \\
& +\left( 1-R_{j}\right) ^{2}F\left( \frac{c_{1j}^{-2}}{M_{1j}},...,\frac{c_{n_{j}j}^{-2}}{M_{n_{j}j}}\right)
\end{eqnarray*}%
As a consequence of the hybrid two stage design and the strong law of
large numbers, $T/T_{j}$ and $T_{j}/M_{ij}$ remain bounded for all $i,j$ as $%
T\rightarrow \infty $. It follows that, as $T\rightarrow \infty$,%
\[
F\left( \frac{c_{1j}^{-2}}{M_{1j}},...,\frac{%
c_{n_{j}j}^{-2}}{M_{n_{j}j}}\right) =o\left( T^{-1}\right),
\]%
and
\[
T_{j}^{-1}\sum_{i=1}^{n_{j}-1}\sum_{k=i+1}^{n_{j}}\frac{\left(
M_{ij}c_{kj}^{-1}-M_{kj}c_{ij}^{-1}\right) ^{2}}{M_{ij}M_{kj}}=o\left(
T^{-1}\right),
\]%
thanks to (\ref{eq1}) and (\ref{eq2}). Thus,%
\[
Var\left\{ \hat{R}_{j}\right\} =Q_{j}+o\left( T^{-1}\right) , \: as \:%
T\rightarrow \infty,
\]%
which implies that%
\begin{eqnarray*}
\prod\limits_{j=1}^{n}\left( \frac{Var\left\{ \hat{R}_{j}\right\} }{R_{j}^{2}%
}+1\right) &=&\prod\limits_{j=1}^{n}\left( \frac{Q_{j}}{R_{j}^{2}}+1+o\left(
T^{-1}\right) \right) \\
&=&\prod\limits_{j=1}^{n}\left( \frac{Q_{j}}{R_{j}^{2}}+1\right) +o\left(
T^{-1}\right) 
\end{eqnarray*}%

As a consequence,%
\[
\lim\limits_{T\rightarrow \infty } T\left( Var\left\{ \hat{R}\right\} -Q\right)=%
R^{2} \lim\limits_{T\rightarrow \infty }T.%
\left[ \prod\limits_{j=1}^{n}\left( \frac{Q_{j}}{R_{j}^{2}}+1\right) -1-Q%
\right]
\]%
Now, expanding the product within the limit and applying identity (\ref{lem1}),
after having replaced $Q_{j}$ by its expression (\ref{qj}), one obtains 
\begin{eqnarray*}
\prod\limits_{j=1}^{n}\left( \frac{Q_{j}}{R_{j}^{2}}+1\right) -1&=&R^{2} \left[ \sum\limits_{j=1}^{n}\frac{%
Q_{j}}{R_{j}^{2}}+F\left( \frac{Q_{1}}{R_{1}^{2}},...,\frac{Q_{n}}{R_{n}^{2}}%
\right) \right] \\
&=&Q+R^{2} \left( A+B\right), 
\end{eqnarray*}%

where	
\begin{eqnarray*}
A&=&T^{-1}\sum\limits_{i=1}^{n-1}\sum\limits_{k=i+1}^{n}\frac{\left( T_{i}\left( \frac{%
1-R_{k}}{R_{k}}\right) \left( \sum\limits_{l=1}^{n_{k}}c_{lk}^{-1}\right)
-T_{k}\left( \frac{1-R_{i}}{R_{i}}\right) \left(
\sum\limits_{l=1}^{n_{k}}c_{li}^{-1}\right) \right) ^{2}}{T_{i}T_{k}}  \\
B&=&F\left( \frac{Q_{1}}{R_{1}^{2}},...,\frac{Q_{n}}{R_{n}^{2}}\right)   
\end{eqnarray*}%

Once more, the hybrid two stage allocation scheme and the strong law of
large numbers provide%
\[
\lim_{T\rightarrow \infty }T.A=0
\]%
and%
\[
\lim_{T\rightarrow \infty }T.B=0,
\]%
which achieves the proof.
\end{proof}
\section{Monte Carlo simulation}
\label{monte carlo}
Let us remark first that the lower bound $Q$ is a first order approximation
of the optimal variance of the reliability estimate under the constraint (%
\ref{constr}) when $T$ is large.

In the first experiment, {we will validate the fact that the hybrid scheme
provides the best allocation at system level.} As in Figure \ref{fig:f1}, we consider a simple
parallel-series system of two subsystems each one, with varying
reliabilities and a fixed sample size $T=20 $. For each situation A,B,C and
D and for each partition sample size $\left\{ T_{1},T-T_{1}\right\} $ where $%
T_{1} $ varies from $\left[\sqrt{T}\right]$ to $T-\left[\sqrt{T}\right]$, {by step one }, we have applied the proper
two stage design for each parallel subsystem and
reported in a bar diagram $Var\left( \hat{R}\right) $ as a function of $T_{1}
$, see Figure \ref{fig:f2}. On the other hand, in Table \ref{tab:1}, we
have reported the expected value of $T_{1}=M_{11}+M_{21}$ given by the
hybrid two stage design. As expected, our scheme gives the best allocation
for each situation.

The second experiment deals with a non trivial
parallel-series system just as in \cite{matcom}, where subsystems are composed,
respectively, of 2,3,4 and 5 components, see Figure \ref{fig:f3}. The
partition total numbers $T_{j}$ to test in each subsystem are evaluated
systematically by the hybrid two stage design while their sum $T$ is
incremented from 100 to 10000 by step of 100. Figure \ref{fig:f4}
shows the rate of the excess of variance $T\left( Var\left( \hat{R}\right)
-Q\right) $ at logarithmic scale as a function of the sample size $T$. The
asymptotic optimality of the hybrid scheme is validated.
\begin{figure}[htbp]
\centering
\includegraphics[width=0.75\textwidth]{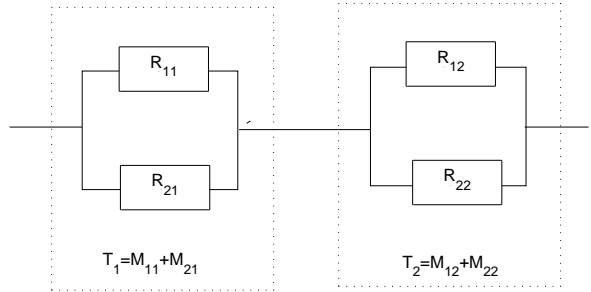}  
\caption{A simple parallel-series system of two subsystems with two components each one}
\label{fig:f1}
\end{figure}
%\newpage 
\begin{figure}[htbp]
\centering
\includegraphics[width=0.75\textwidth]{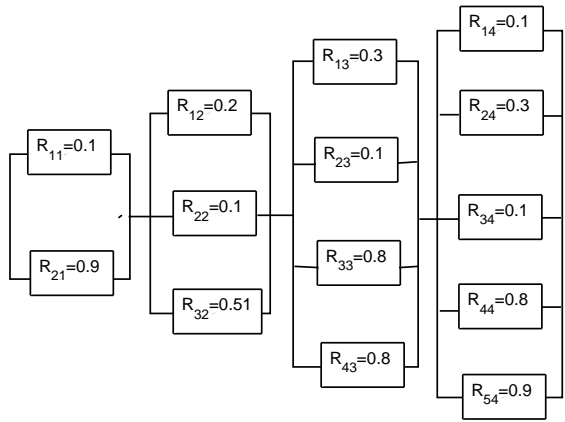}  
\caption{A non trivial parallel-series system.}
\label{fig:f3}
\end{figure}
\begin{figure}[htbp]
\centering
\includegraphics[width=0.75\textwidth]{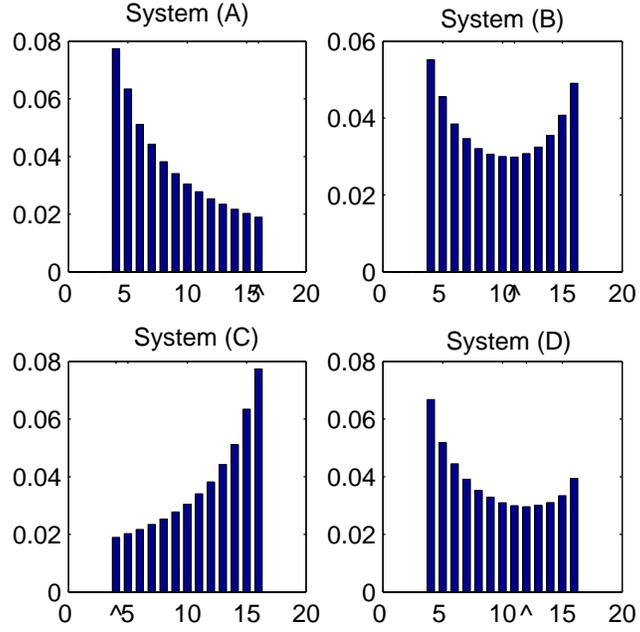}  
\caption{Bar diagram $Var\left( \hat{R}\right) $ as a function of $T_{1}$
for each case A,B,C and D : $\hat{ }$ shows the minimum of $Var\left( \hat{R}%
\right) $}
\label{fig:f2}
\end{figure}

\begin{figure}[htbp]
\centering
\includegraphics[width=0.75 \textwidth]{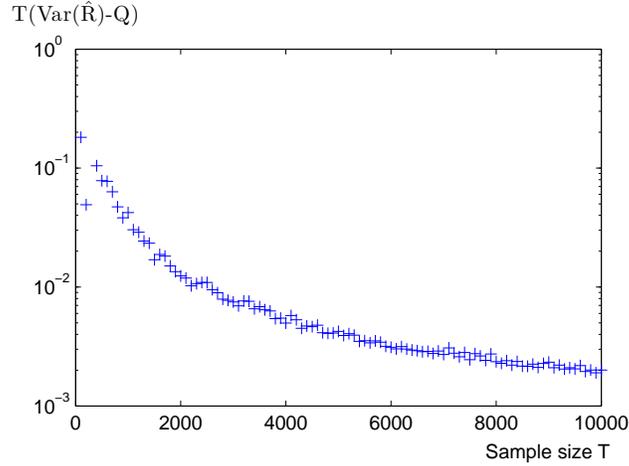}  
\caption{Asymptotic optimality of the hybrid two stage design : the speed of
the excess of variance $T\left( Var\left( \hat{R}\right) -Q\right) $ at
logarithmic scale as a function of the sample size $T$ }
\label{fig:f4}
\end{figure}

\begin{table}[htbp]{
\centering
\begin{tabular}{|l|l|l|l|l|l|}
\hline
System & $R_{11}$ & $R_{21}$ & $R_{12}$ & $R_{22}$ & $E(T_{1})$ \\ \hline
A & $0.1$ & $0.11$ & $0.9$ & $0.99$ & $16$ \\ \hline
B & $0.5$ & $0.55$ & $0.51$ & $0.6$ & $11$ \\ \hline
C & $0.9$ & $0.99$ & $0.1$ & $0.11$ & $4$ \\ \hline
D & $0.2$ & $0.4$ & $0.6$ & $0.3$ & $12$ \\ \hline
\end{tabular}
\vspace{0.5cm}
\caption{Expected value of $T_{1}=M_{11}+M_{21}$ given by the hybrid two
stage design}\label{tab:1}}
\end{table}

%\newpage
\section{Conclusion}
The proof of the first order asymptotic optimality for the proper two stage
design for a parallel subsystem as well as for the hybrid two stage design
for the full system has been obtained mainly through the following steps
\begin{itemize}
\item an adequate writing of the variance of the reliability estimate,
\item a lower bound for this variance, independent of allocation,
\item the allocation defined by the hybrid sampling scheme and the strong law of large numbers.
\end{itemize}
With a straightforward but tedious adaptation, the above study can be namely
extended to deal with complex systems involving a multi-criteria
optimization problem under a set of constraints such as risk, system weight,
cost, performance and others, in a fixed or in a Bayesian framework.
\section*{Acknowledgments}
This work is supported with grants by the national research project (PNR) and the L.A.A.R laboratory of the department of physics 
in the university Mohamed Boudiaf of Oran.

\end{document}